\renewcommand{\cite}{\citep}
\newtheorem{thm}{Theorem}
\newtheorem{prop}{Proposition}
\newtheorem{defn}{Definition}
 \journalname{Computational Economics}
\begin{document}

\title{Accelerating Implicit Finite Difference Schemes Using a Hardware Optimized Tridiagonal Solver for FPGAs
}

\titlerunning{Hardware Optimized Tridiagonal Solver for FPGAs}        

\author{Samuel Palmer 
}


\institute{S. Palmer \at
              Department of Computer Science, University College London, Gower Street, London, WC1E 6BT \\
              \email{ucabsdp@ucl.ac.uk}           
}

\date{Received: date / Accepted: date}

\maketitle

\thispagestyle{empty}
\pagestyle{empty}

\begin{abstract} 

We present a design and implementation of the Thomas algorithm optimized for hardware acceleration on an FPGA, the Thomas Core. The hardware-based algorithm combined with the custom data flow and low level parallelism available in an FPGA reduces the overall complexity from 8N down to 5N serial arithmetic operations, and almost halves the overall latency by parallelizing the two costly divisions. Combining this with a data streaming interface, we reduce memory overheads to 2 N-length vectors per N-tridiagonal system to be solved. The Thomas Core allows for multiple independent tridiagonal systems to be continuously solved in parallel, providing an efficient and scalable accelerator for many numerical computations. Finally we present applications for derivatives pricing problems using implicit finite difference schemes on an FPGA accelerated system and we investigate the use and limitations of fixed-point arithmetic in our algorithm.

\keywords{High-Performance Computing \and Parallel Computing \and FPGA \and Tridiagonal Matrix \and Derivatives }

\acknowledgement{I would like to thank David Thomas for his support with this work.}

\end{abstract}

\section{Introduction}
\subsection{FPGAs}
Field programmable gate arrays (FPGAs) provide an integrated circuit that can be reconfigured on the fly or `in the field' in the form of a chip.  FPGAs provides a flexible and cost effective way to develop and implement custom hardware designs. The core component that allows an FPGA to be reconfigurable is a look-up table (LUT). A LUT produces one or more outputs as a function of the digital inputs. These functions are determined when the FPGA is configured and provide the desired logic controlled via the programmable cells. The other key component that helps to increase the performance of FPGAs is on-chip block memory (BRAM) which can provide fast local memory caches. Some FPGA chips offer additional features such as high speed digital signal processors (DSP) and multipliers. The FPGA chip is then usually embedded on a circuit board and connected to additional peripherals such as DDR memory, USB ports, ethernet, PCI express and VGA ports to provide the complete heterogeneous computing system. 

\subsection{Finite Difference Schemes and Tridiagonal Systems}

Finite difference (FD) schemes are an important tool for solving parabolic partial differential equations (PDEs) numerically. In financial engineering FD methods are commonly employed to solve PDEs which model derivatives, such as the famous Black-Scholes equation (BSE) \cite{Tavella}.

Finite difference schemes begin by discretising the problem domain into a mesh/grid over the time interval $[0, 1]$, and in basic cases, the asset price interval $[0,S_{max}]$. There are several variations and enhancements, however we now describe the basic implicit scheme for one dimension. The domain is discretised into $N$ asset price steps and $M$ time steps, given by:\\
\begin{eqnarray}
\Delta S = \frac{S_{max}}{N}\\
\Delta t = \frac{1}{M}
\end{eqnarray}

\noindent The spatial derivative terms are approximated using central differencing, and the time derivative using backwards differencing. These discretizations are then substituted into the PDE to produce the discrete difference equation. For example the BSE gives:

\begin{eqnarray} \label{eq:implicit}
V_n^{m} = a_nV_{n-1}^{m-1} + b_nV_{n}^{m-1}+c_nV_{n+1}^{m-1}
\end{eqnarray}

\noindent with problem dependant stencil coefficient values $a_n$, $b_n$, $c_n$. The resulting system of equations can be written in matrix form as

\begin{eqnarray}
{\bf{A}}{\bf{V}}^{t-1} = {\bf{V}}^{t}
\end{eqnarray}

\noindent This matrix inversion problem involves solving for the price vector ${\bf{V}^{t-1}}$ at the current time-step, where the vector $V^t$ is known from the previous implicit step. The coefficient matrix ${\bf{A}}$ takes on the banded tridiagonal form shown below:
\begin{eqnarray}\label{eq:banded}
{\bf{A}} = 
\begin{bmatrix}
b_0 & c_0 & 0 & 0 & 0 &...\\
a_1 & b_1 & c_1 & 0 & 0 &...\\
0 & a_2 & b_2 & c_2 & 0 &...\\
0 & 0 & a_3 & b_3 & c_3 & 0 &...\\
.\\
.\\
.\\
0 &  &  & & ... & 0 & a_N & b_N 
\end{bmatrix}
\end{eqnarray}

\noindent or more generally written as the matrix inversion problem $\bf{A}x = y$. When pricing multidimensional derivatives, such as basket options or under stochastic volatility, another class of finite difference schemes known as alternating-direction-implicit schemes \cite{Peaceman1955} may be used. These schemes solve the PDE in an implicit manner within multiple dimensions. These methods can be computational challenging as they require solving multiple tridiagonal systems at each time step, thus significant research efforts have gone into creating fast parallel solvers on devices such as GPUs \cite{Dang2010} \cite{Egloff2011}.

\subsection{Thomas Algorithm}

\begin{algorithm}[H]
\caption{Thomas Algorithm (a,b,c,y) Pseudo Code}
\center
\begin{algorithmic}
\STATE d[0] = b[0]
\STATE z[0] = y[0]
\FOR{ i = 1 to $N$} 
\STATE prev = i - 1
\STATE $l_i$ = a[i]/d[prev]
\STATE d[i] = b[i]-$l_i$*c[prev]
\STATE z[i] = y[i] - $l_i$*z[prev]
\ENDFOR
\STATE z[N] = z[N]/d[N]
\FOR{ i = N-1 to 0} 
\STATE x[i] = (z[i]-c[i]*x[i+1])/d[i]
\ENDFOR
\STATE return x[i]
\end{algorithmic}
\end{algorithm}

The Thomas algorithm \cite{thomas} is the simplest method used to solve a tridiagonal system of equations and is commonly employed on serial devices such as a CPU. The Thomas algorithm is a specialised case of Gaussian elimination and can be derived from the LU decomposition of the matrix $A$. This reduces the system to the solution of two bi-diagonal systems which can then be solved via Gaussian elimination. The first system is solved via forward substitution and the second system is solved via backward substitution. These two stages will be referred to as the forward and backward iterations. The Thomas algorithm is given in Algorithm 1, it has a complexity of $O(N)$ and requires a total of 8N arithmetic operations to solve an N-tridiagonal system.

In parallel computing the Thomas algorithm is usually less favoured compared to algorithms such as recursive-doubling \cite{Stone1973}, cyclic-reduction \cite{Hockney1965} and parallel cyclic-reduction \cite{Hockney1981}, while these algorithms have a larger number of arithmetic operations, some of the operations be can parallelised on devices such as GPUs \cite{Zhang2010} resulting in an overall lower algorithmic complexity. With recent increased interest in FPGA acceleration, attempts have been made to port tridiagonal solvers onto FPGAs \cite{Oliveira2008, Warne2012, warne2014, Chatziparaskevas2012}. In this application the simplicity of the Thomas algorithm makes it well suited to the task when compared to cyclic-reduction which may be too complex for efficient data flow FPGA implementation.

\section{Algorithmic Optimization and Low Level Parallelizm}
Figures \ref{fig:dpf} and \ref{fig:dpb} depict the data dependency of the Thomas algorithm. We observed that in the forward iteration there are two separate branches of computation to calculate$d_n$ and $z_n$ which suggests a first level of parallelism. A similar approach has been taken by both Oliveira et al \cite{Oliveira2008} and Warne et al \cite{Warne2012}. This optimisation reduces the effective serial arithmetic operations down from 8N to 6N. 

\begin{figure}
\begin{center}
\includegraphics[scale=0.10]{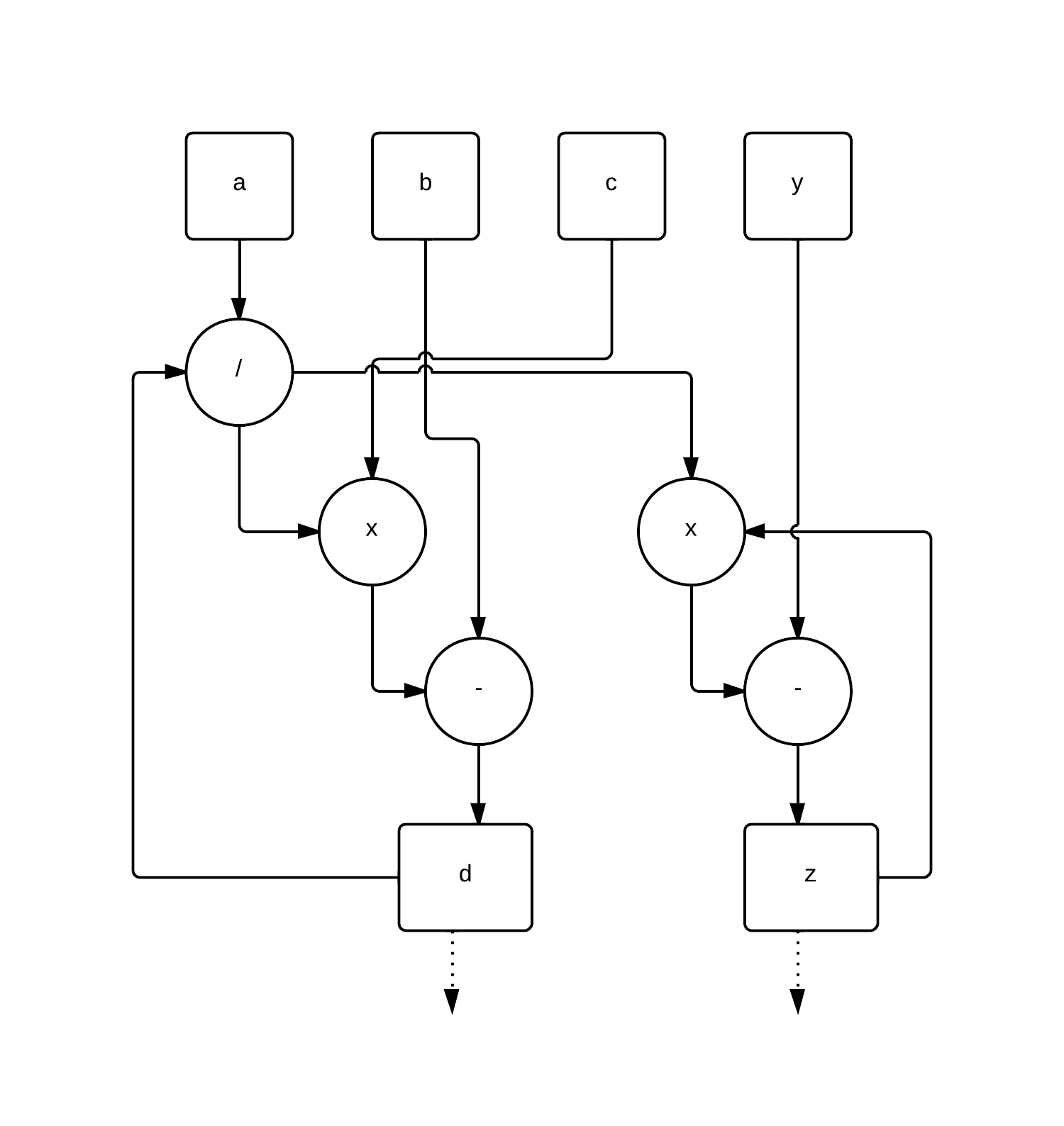}
\caption{Data dependency graph for the forward iteration of the Thomas algorithm}
\label{fig:dpf}
\end{center}
\end{figure}

\begin{figure}
\begin{center}
\includegraphics[scale=0.10]{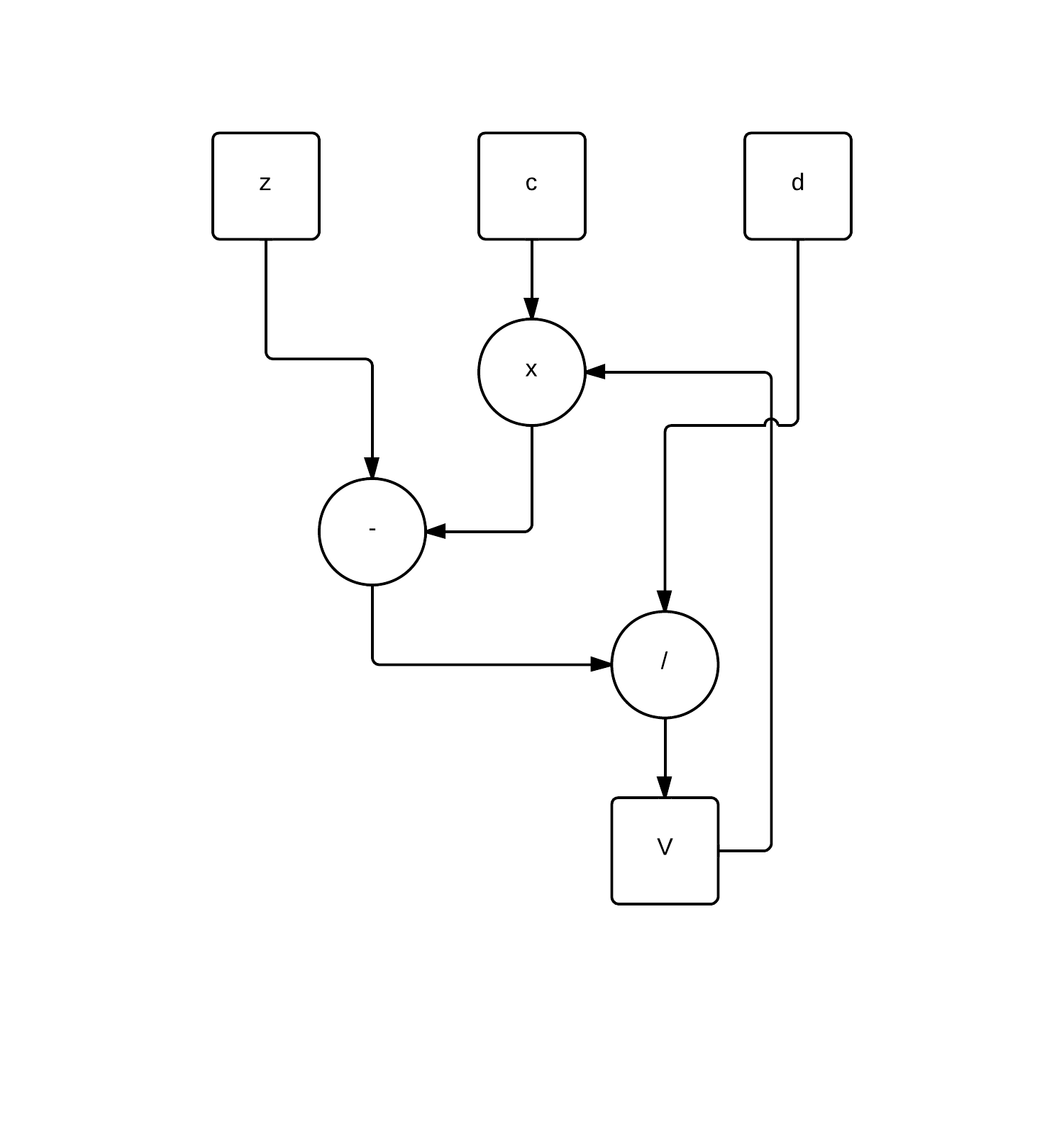}
\caption{Data dependency graph for the backwards iteration of the Thomas algorithm}
\label{fig:dpb}
\end{center}
\end{figure}

The problem with this simple optimisation is that while there is a reduction in serial operations these are all subtractions or multiplications, which are computational cheap when compared to divisions. Consequently a competitive speed-up over faster clocking devices such as CPUs may not be obtained \cite{Warne2012}. We thus introduce a simple algorithmic rearrangement that can allow for the two divisions from the backwards and forward iterations to be effectively parallelised. Equation \ref{eq:factorise} shows the factorisation of the backwards iteration calculation where we now treat the divisions of $z_n$ and $c_n$ by $d_n$ individually
\begin{equation}\label{eq:factorise}
\frac{z_n - c_nV_{n+1}}{d_n} = \frac{1}{d_n}z_n - (\frac{1}{d_n}c_n)V_{n+1} .
\end{equation}

\begin{figure}
\begin{center}
\includegraphics[scale=0.10]{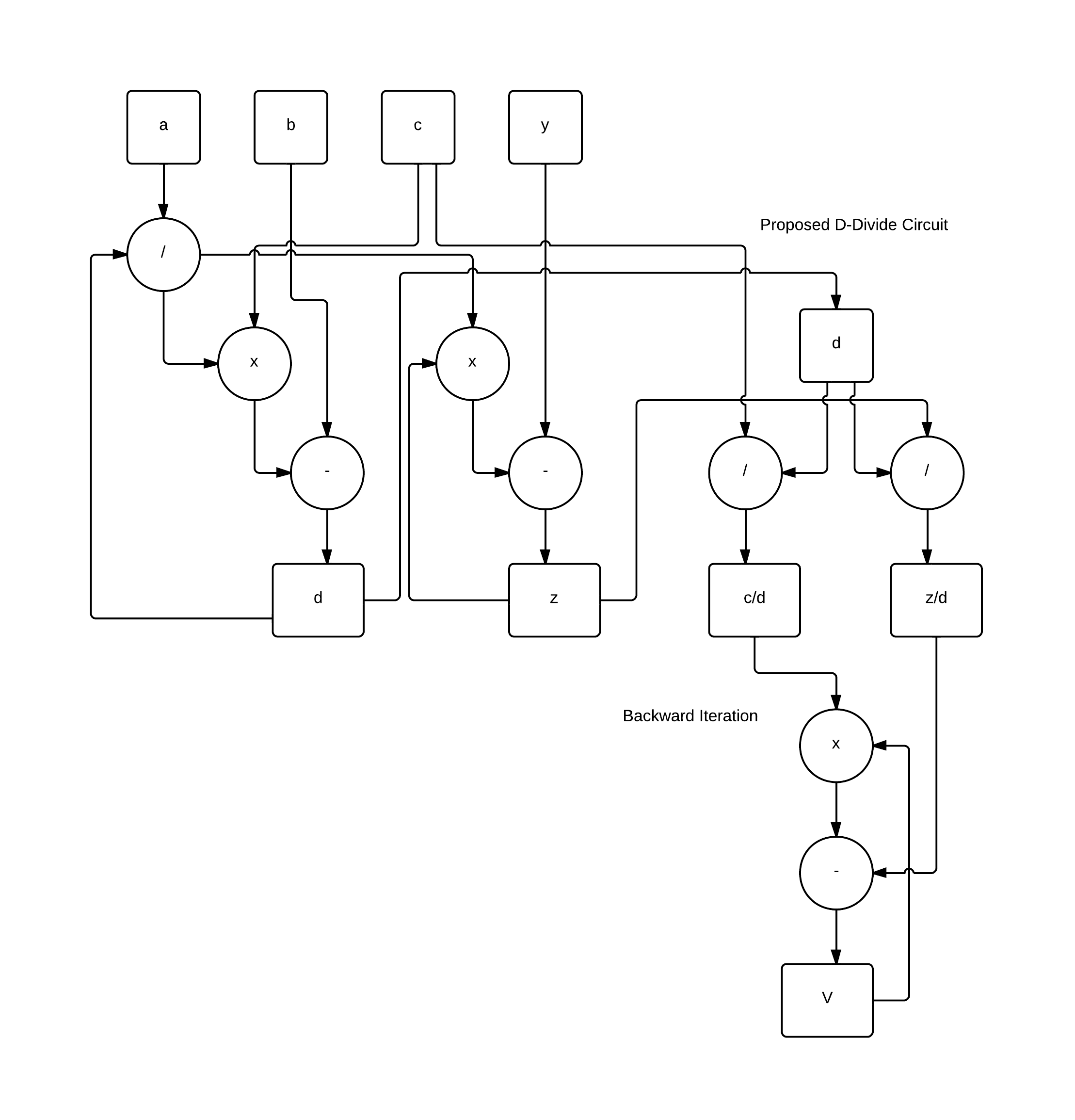}
\caption{Data dependency graph for the proposed Thomas algorithm structure optimised for FPGA implementation.}
\label{fig:proposedFlow}
\end{center}
\end{figure}

In a serial implementation this would add an extra division to the total number of arithmetic operations, which is not usually desirable, but as shown in Figure \ref{fig:proposedFlow} the data dependence is in fact reduced in this framework. Indeed we can now treat these two divisions in parallel with each other as well as in parallel with the forward iteration calculations. This reduces the serial arithmetic operations down from the original 8N to 5N.

For FPGA implementations this rearranged algorithm has two advantages:
\begin{enumerate}
\item  Total latency of the algorithm is almost halved by parallelising the two lengthy divisions.
\item Memory requirements are reduced from the need to save three intermediate data vectors ($c$, $z$ and $d$) to two ($c/d$ and $z/d$).
\end{enumerate}

\subsection{Pipelining}

Further to the low level algorithmic optimisations higher level parallelism can be achieved in two ways: pipelining the data through the computation of the forward and backward iterations; and pipelining the sets of data between the forward and backward iterations, which has commonly been implemented for multiple CPU versions to parallelise the Thomas algorithm \cite{Duff1999}.

Firstly the computational units themselves can be deeply pipelined, an approach used by Olivera et al \cite{Oliveira2008}, which allows for multiple independent tridiagonal systems to be computed in the same iteration cycle. For example, if the forward iteration computational unit has $P_F$ pipeline stages then throughout one iteration it is possible to fill each stage of the pipeline with a computation allowing for $P_F$ independent tridiagonal systems to be computed. 

The second type of pipelining is that given a set $T$ of pipelined tridiagonal systems for either iteration, as discussed above, we can simultaneously compute the forward and backwards iterations of the two different sets (that the first set has already been through the forward iterations) independently in one Thomas solver. In \cite{warne2014} the authors investigate using OpenCL and Xilinix HLS to build the Thomas solvers, but do not obtain this level of parallelism due to the complex scheduling involved for the pipelines. In order to achieve the desired low level control we implemented this design in VHDL.

\subsection{Hardware Architecture}

The input to the solver core consists of 5 data items $a, b, c, y$ and $id$ where $id$ is the local identifier of the system to be solved. This acts as a thread identifier and is important for addressing the correct memory stacks in the solver. The hardware architecture consists of four main components: the forward iteration core, the d-divider, the backwards iteration core and the stack array. The forward and backward cores contain the pipelined arithmetic for the stages of the algorithm, and the d-divider consists of two dividers to carry out the $c/d$ and $z/d$ computations. The stack array is used for storing the intermediate variables $c/d$ and $z/d$. A stack can be used due to the nature of the problem as the backwards iterations first require the last values calculated by the d-divider, which saves unnecessary complications with memory addressing. 

Connecting the forward iterations to the backwards iterations is a queue. This queue allows the problem index to be passed onto the backwards core for computation once the forward iterations have finished, this system allows for efficient independent operation of the forwards and backward iterations. The backwards core checks for space in the pipeline, and reads in the problem to begin computing if there is space, otherwise it remains queued. 

In addition to the main Thomas algorithm core, the core is placed in a wrapper allowing for easy usability. The wrapper consists of first-in-first-out (fifo) queues for the input data and output results, allowing for variable write and read times to and from the core, as well as the option for floating-point to fixed-point converters for input data and vice-versa for results. 

When changing the arithmetic only the arithmetic cores are changed, and the architecture remains constant. The only variability with the arithmetic cores is the pipelining due to the differing latencies, but this is managed via adjustable parameters within the solver VHDL. 

\section{Design Analysis}
Here we theoretically analyse the performance of the solver for solving multiple independent tridiagonal systems $\bf{T}$ $ = \{T^{N_1}_1, T^{N_2}_2, \ldots , T^{N_m}_M\}$, where M is the total number of independent tridiagonal systems to be solved and $N_m$ is the size of the $m$th system. The notation is used in this work is as follows:\\

\begin{itemize}
\item $T_m^{N_m}$ is the size of the $m$th tridiagonal system to be solved of size $N_m$,\\
\item $\bf{T^N}$ is a special case where for all $T_m \in$ $\bf{T}$, $N_m = N$,\\
\item $C^D_{\{+,-,/,\times\}}$ is the number of clock cycles taken for that arithmetic operation using data format $D$,\\
\item $C_{F,B,A}$ is the number of clock cycles taken for a single forwards ($F$) and backwards ($B$) iteration and administration costs ($A$).\\
\item $f$ is the clock frequency of the FPGA system.\\
\end{itemize}

The number of cycles taken for the iteration stages are:
\begin{eqnarray}
C_F = C^D_/ + C^D_\times + C^D_- \\
C_B = C^D_\times + C^D_+ + C^D_/ ,
\end{eqnarray}
\noindent where $C_A$ is a constant determined by the programming of the algorithm.\\

\noindent To fully harness the power of the pipelined design it is desired that maximal throughput should be achieved by scheduling groups of independent computations. \\

\begin{defn}
The number of computational blocks, $B$, is defined as the number of subsets of independent tridiagonal systems to be solved. The set of of blocks given by $\bf{B}$ = $\{{\bf{b}}^{m_1}_1,{\bf{ b}}_2^{m_2}, \ldots ,{\bf{b}}_B^{m_B}\}$, where ${\bf{b}}^{m_i}_i \subset$ $\bf{T}$ has size $m_i$ such that:
\begin{eqnarray}
\notag
\cup_{i=1}^B {\bf{b}}^{m_i}_i = {\bf{T}}  \quad \text{ }\cap_{i=1}^B {\bf{b}}^{m_i}_i = \emptyset .
\end{eqnarray}
\end{defn}

 \noindent Thus for a given $M$ the time to compute $\bf{T^N}$ is given by: 
\begin{eqnarray}\label{eq:speed}
t_{\bf{T^N}} =  \frac{ NB(C_F + C_A)   + BC_/ + NC_B + 2\sum_{b=1}^B (m_b-1) }{f} .
\end{eqnarray} 
\normalsize

The partitioning of $\bf{T}$ into the set of blocks $\bf{B}$ can be affected by the data transfer rate $r_d$ between the solver and the host system. The rate of computation, $r_c$, of the Thomas solver is given by:
\begin{eqnarray}
r_c = \frac{5D}{f} ,
\end{eqnarray}
\noindent where $D$ is the number of bits used to represent a number in the given format. This value is the rate at which data can be processed by the Thomas solver, which requires 5 inputs, $a, b, c, y$ and $id$ and can process a row every clock cycle.

The optimal number of blocks $B$ can be obtained if the rate of transfer is quicker than the rate of computation, i.e. the solver can receive all the 5 values in one clock cycle or less: \\
\begin{eqnarray}
B_{\text{opt}} = \text{floor} \left( \frac{M}{C_F} \right) \text{ , } r_d \geq r_c
\end{eqnarray}

\noindent It may be the case that the data transfer rate is slower than the rate of computation and hence the solver has to be stalled waiting for the data. It is therefore desirable to compute the maximum number of tridiagonal systems in a block $\bf{b}$ in the pipeline without stalling for data. The number of blocks $B$ is given by: 

\begin{align}
m_{\text{opt}} &= \text{ceil}(\frac{r_c}{r_d}) \text{ , } r_d < r_c \\
B &= \text{floor} \left( \frac{M}{m_{opt}} \right)
\end{align}

\noindent Maximum throughput for the solver can be obtained if the set of tridiagonal systems to be solved completely fills the pipeline of the solver:
\begin{align}\label{eq:maxtp}
M \text{ mod } C_F &= 0 \\
B &> 1 .
\end{align}

\section{Fixed-Point Arithmetic Analysis}

\subsection{Numerical Bounds}
To maximise performance it may be required that custom data formats are used in the FPGA design. Fixed-Point arithmetic often provides faster and smaller FPGA designs, for example \cite{Jin2012, Tian2010}, but at the cost of the loss of some precision in the results and a higher risk of arithmetic overflow. Therefore it is important to know the range of values the solver is expected to use in the algorithm to allow for the custom data formats to be optimised for the problem. The preceding results here require that  $b(n)$ is a positive, monotonically increasing function of the row index, $n$, i.e. $b_n < b_{n+1}$ and $|a_n|< 1 $ and $|c_n|<1$ $\forall i \leq N$. These theorems will be useful later for range bounding the implicit pricing problem. In these following results the $\ell_{\infty}$-norm of the set of coefficients $a$, $b$ or $c$, denoted by $\|x\|_{\infty}$, is used, the value of this norm is the largest absolute value in a set.

\begin{thm}\label{thm:dbound}
Let $A$ be diagonally dominant by row or columns, and let $A$ have LU factorisation $A=LU$. Then $\| d \|_\infty \leq 3\| b \|_\infty$
\end{thm}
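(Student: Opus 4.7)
The plan is to work from the recurrence $d_0 = b_0$, $d_i = b_i - a_i c_{i-1}/d_{i-1}$ extracted from the forward sweep of Algorithm~1, and prove the bound by induction on $i$. The heart of the argument is the classical stability property of the Thomas algorithm on a diagonally dominant matrix: a certain ratio in the recursion stays bounded by $1$, which prevents the pivots $d_i$ from blowing up. The row-dominant and column-dominant cases are structurally parallel but require different invariants, so I would handle them separately.

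For the row-dominant case (where $|b_i| \geq |a_i| + |c_i|$ for each $i$), I would prove by induction that $|c_i|/|d_i| \leq 1$. The base case $|c_0|/|d_0| = |c_0|/|b_0| \leq 1$ is immediate from row-$0$ dominance. For the step, the reverse triangle inequality applied to the recurrence gives
\begin{equation*}
|d_i| \;\geq\; |b_i| - |a_i|\cdot \frac{|c_{i-1}|}{|d_{i-1}|} \;\geq\; |b_i| - |a_i| \;\geq\; |c_i|,
\end{equation*}
closing the induction. With this invariant in hand, the forward triangle inequality yields $|d_i| \leq |b_i| + |a_i| \leq 2|b_i| \leq 2\|b\|_\infty$, which is comfortably inside the target bound.

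For the column-dominant case (where $|b_i| \geq |c_{i-1}| + |a_{i+1}|$), I would run the analogous induction on the LU multiplier $|l_i| = |a_i|/|d_{i-1}|$, showing $|l_i| \leq 1$ for every $i \geq 1$. The base case uses column-$0$ dominance ($|b_0| \geq |a_1|$); the inductive step uses column-$(i-1)$ dominance in the form $|b_{i-1}| - |c_{i-2}| \geq |a_i|$, combined with the inductive hypothesis $|l_{i-1}| \leq 1$ applied through $|d_{i-1}| \geq |b_{i-1}| - |l_{i-1}||c_{i-2}| \geq |a_i|$. Once $|l_i| \leq 1$, the recurrence gives $|d_i| \leq |b_i| + |c_{i-1}|$, and column-$i$ dominance supplies $|c_{i-1}| \leq |b_i|$, closing again to $2\|b\|_\infty$.

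I expect the main obstacle to be bookkeeping rather than depth: the two cases require the invariant to propagate in opposite directions (the row case controls $|c/d|$ while the column case controls $|a/d|$), and the boundary rows $i=0$ and $i=N$ need care because one of the off-diagonals is absent and the dominance inequality degenerates to a one-sided statement. Once these are aligned, the constant $3$ in the theorem is generous slack over the tighter constant $2$ that the argument actually delivers, so no sharper estimate is required.
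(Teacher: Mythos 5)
Your argument is correct, but it takes a genuinely different route from the paper. The paper does not prove the bound from scratch: it simply invokes Higham's inequality $|l_n c_{n-1}| + |d_n| \leq 3|b_n|$ for diagonally dominant tridiagonal matrices (Higham 2002, p.~175) and drops the first term. Your proof is a self-contained induction that reconstructs the content of that cited lemma: in the row-dominant case the invariant $|c_i|/|d_i| \leq 1$ propagates via $|d_i| \geq |b_i| - |a_i|\,|c_{i-1}|/|d_{i-1}| \geq |b_i| - |a_i| \geq |c_i|$, and in the column-dominant case the invariant $|l_i| \leq 1$ propagates via $|d_{i-1}| \geq |b_{i-1}| - |c_{i-2}| \geq |a_i|$; both close correctly, the boundary rows degenerate harmlessly, and the nonvanishing of the pivots is covered by the hypothesis that the LU factorisation exists. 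What your approach buys is twofold: it makes the theorem independent of the external reference, and it exposes that the constant for $\|d\|_\infty$ alone is $2$ rather than $3$ --- the factor $3$ in Higham's inequality arises only because he bounds the combined quantity $|l_n c_{n-1}| + |d_n|$ (indeed your invariants give $|l_n c_{n-1}| \leq |a_n| \leq |b_n|$, recovering his bound as a corollary). What the paper's route buys is brevity and a pointer to the standard stability theory of the Thomas algorithm, of which this is a known special case.
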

\begin{proof}
We use the result that: 
	\begin{equation}
	|l_n c_{n-1}| + |d_n| \leq 3|b_n| 
	\end{equation}
\noindent see \cite[pg.175]{Higham2002}. Simple rearrangement and the observation that the maximum will occur at the maximum absolute value gives the result of Theorem \ref{thm:dbound}
\end{proof}

\begin{thm}\label{thm:dlower}
Let $A$ be diagonal dominant by row, and let $A$ have LU factorisation $A=LU$ then 
$|d_n|  > |b_0 - \frac{\|a\| \infty}{|b_0|}\|c\|_{\infty}| \quad \text{for every } n$, providing that $b(n)$ is a positive monotonically increasing function of the row index, $i$ and $\Delta b \leq \|c\|_{\infty}$.
\end{thm}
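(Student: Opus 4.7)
My plan is to proceed by induction on the row index $n$, working from the recurrence
\begin{equation*}
d_n = b_n - \frac{a_n c_{n-1}}{d_{n-1}}, \qquad d_0 = b_0,
\end{equation*}
which is read off directly from the forward pass of Algorithm 1. The base case $n=0$ is immediate: $|d_0|=b_0$, and since $\|a\|_\infty \|c\|_\infty \geq 0$, the positivity of $b_0$ together with $|a_n|,|c_n|<1$ keeps the shift $\|a\|_\infty\|c\|_\infty/b_0$ small enough that $b_0$ strictly exceeds $\bigl|b_0 - \|a\|_\infty\|c\|_\infty/b_0\bigr|$.

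For the inductive step I would carry the strengthened hypothesis $|d_{n-1}| \geq b_0$. Under this, the triangle inequality applied to the recurrence, combined with the monotonicity $b_n \geq b_0$ given in the theorem, yields
\begin{equation*}
|d_n| \;\geq\; b_n - \frac{|a_n|\,|c_{n-1}|}{|d_{n-1}|} \;\geq\; b_0 - \frac{\|a\|_\infty \|c\|_\infty}{b_0},
\end{equation*}
which is exactly the lower bound claimed in Theorem \ref{thm:dlower} (up to the absolute-value dressing, which is handled case-by-case according to whether $b_0^2$ exceeds $\|a\|_\infty\|c\|_\infty$).

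The main obstacle, and the point at which the auxiliary assumption $\Delta b \leq \|c\|_\infty$ must be brought in, is closing the strengthened hypothesis $|d_n|\geq b_0$ so that the induction can continue. Heuristically the condition restricts how much $b$ can drift per step relative to $\|c\|_\infty$, and combined with row diagonal dominance ($|b_n| \geq |a_n|+|c_n|$) this should be enough to control the size of the subtracted term $a_n c_{n-1}/d_{n-1}$ at every row and keep $|d_n|$ from dropping below $b_0$. Making this propagation argument quantitative — and upgrading the weak inequality produced by the triangle step into the strict inequality that the theorem asserts — is where I expect the bulk of the technical work to sit.
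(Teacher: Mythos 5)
Your plan is essentially the paper's own argument: the paper computes $d_1 = b_1 - \frac{a_1}{b_0}c_0$, bounds it below by $b_0 - \frac{\|a\|_{\infty}}{|b_0|}\|c\|_{\infty}$ using the monotonicity of $b$, and then invokes the $\Delta b \leq \|c\|_{\infty}$ condition to argue that $\|l\|_{\infty} \leq \frac{\|a\|_{\infty}}{|b_0|}$ (equivalently $|d_n| \geq b_0$) persists to later rows — which is exactly the strengthened induction hypothesis you propose to carry. The propagation step you defer as ``the bulk of the technical work'' is precisely the step the paper itself only gestures at (it writes out only the $n=1$ case), so your proposal reproduces the paper's proof, including its one thin spot.
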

\begin{proof}
	\begin{align}
	d_0 &= b_0\\
	d_1 &= b_1 - \frac{a_1}{b_0}c_0\\
	b_1 &- \frac{\|a\|_{\infty}}{|b_0|}\|c\|_{\infty} \leq d_1
	\end{align}
\noindent Under the assumption that $b(n)$ is a positive monotonically increasing function we have

\begin{eqnarray}
b_0 - \frac{\|a\|_{\infty}}{|b_0|}\|c\|_{\infty} \leq b_1 - \frac{\|a\|_{\infty}}{|b_0|}\|c\|_{\infty} \leq d_1 .
\end{eqnarray}

\noindent Finally for this to hold over all cases it must hold that $\|l\|_{\infty} \leq \frac{\| a\|_{\infty}}{|b_0|}$, which implies that:

\begin{eqnarray}
b_0 \leq b_1 - \frac{\|a\|_{\infty}}{|b_0|}\|c\|_{\infty} ,
\end{eqnarray}

\noindent thus for this for hold $\Delta b \leq \|c\|_{\infty}$, given that $\|a\|_{\infty} < |b_0|$.\\
\end{proof}

Guarantees may be available for more general functions $b(n)$, however Theorem \ref{thm:dlower} currently suffices for our purposes. A possible approach may investigate under what conditions a certain sequence $b$ minimises $d_0$.

\begin{thm}\label{thm:lupper}
Let $A$ is diagonal dominant by row, and let $A$ have LU factorisation $A=LU$ then 
$\| l \|_{\infty} < \frac{\|a\|_{\infty}}{ |b_0 - \frac{\|a\|_{\infty}}{|b_0|}\|c\|_{\infty}| } < \frac{\|a\|_{\infty}}{|b_0|}$, provided that $b(n)$ is a positive monotonically increasing function of the row index, $n$, and $\Delta b \leq \|c\|_{\infty}$.
\end{thm}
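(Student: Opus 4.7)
The plan is to read the bound on $\|l\|_\infty$ straight off the LU recursion supplied by the Thomas algorithm and then plug in the uniform lower bound on $|d_n|$ established in Theorem \ref{thm:dlower}. From Algorithm 1 we have $l_n = a_n/d_{n-1}$, so
\begin{equation}
|l_n| \;=\; \frac{|a_n|}{|d_{n-1}|} \;\leq\; \frac{\|a\|_\infty}{\min_{k}|d_k|} .
\end{equation}
Taking the maximum over $n$ converts this into a bound on $\|l\|_\infty$, and the only non-trivial input needed is a uniform lower bound on the pivots $d_n$.

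That lower bound is exactly the content of Theorem \ref{thm:dlower}, which (under the hypotheses of the present theorem, namely $b(n)$ positive and monotonically increasing and $\Delta b \leq \|c\|_\infty$) guarantees $|d_n| > \bigl|b_0 - \tfrac{\|a\|_\infty}{|b_0|}\|c\|_\infty\bigr|$ for every $n$. Substituting gives the left-hand inequality
\begin{equation}
\|l\|_\infty \;<\; \frac{\|a\|_\infty}{\bigl|\,b_0 - \tfrac{\|a\|_\infty}{|b_0|}\|c\|_\infty\,\bigr|} .
\end{equation}
No induction or further algebra is needed for this half; it is a one-line consequence of Theorem \ref{thm:dlower} and the definition of $l_n$.

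The remaining work is the right-hand inequality, comparing $\tfrac{\|a\|_\infty}{|b_0 - (\|a\|_\infty/|b_0|)\|c\|_\infty|}$ with $\tfrac{\|a\|_\infty}{|b_0|}$. Since $\|a\|_\infty > 0$, this reduces to a comparison of the two denominators. The natural argument is: under diagonal dominance on row $0$, $\|a\|_\infty < |b_0|$ (used already in Theorem \ref{thm:dlower}), so $\tfrac{\|a\|_\infty}{|b_0|}\|c\|_\infty < \|c\|_\infty$, and a case split on the sign of $b_0 - \tfrac{\|a\|_\infty}{|b_0|}\|c\|_\infty$ then lets one express $|b_0 - \tfrac{\|a\|_\infty}{|b_0|}\|c\|_\infty|$ explicitly and compare it with $|b_0|$.

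The main obstacle is precisely this comparison. A straight reading of the chain gives the \emph{opposite} direction for the second inequality whenever $b_0 - \tfrac{\|a\|_\infty}{|b_0|}\|c\|_\infty > 0$, because the subtracted quantity is positive and the denominator on the left is therefore smaller than $|b_0|$. To salvage the statement as written, one must either argue in the regime where $\tfrac{\|a\|_\infty}{|b_0|}\|c\|_\infty \geq 2|b_0|$ (so the absolute value flips the sign and exceeds $|b_0|$), or reinterpret $\tfrac{\|a\|_\infty}{|b_0|}$ as the tighter bound already implicit at the end of Theorem \ref{thm:dlower}'s proof and read the chain the other way. Either way, the non-trivial content of Theorem \ref{thm:lupper} is just the first inequality; the second is a book-keeping step that I would resolve by a careful sign-case analysis on $b_0 - \tfrac{\|a\|_\infty}{|b_0|}\|c\|_\infty$ using $\|a\|_\infty < |b_0|$ and $\Delta b \leq \|c\|_\infty$.
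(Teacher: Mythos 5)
Your derivation of the first inequality is exactly the paper's argument: the paper's entire proof is the one-line observation that the maximum of $l$ is attained when the largest $|a_n|$ is divided by the smallest $|d_{n-1}|$, with Theorem \ref{thm:dlower} supplying the uniform lower bound on the pivots. Your diagnosis of the second inequality is also correct, and it points at a genuine defect that the paper's proof simply never addresses: since $b_0>0$ and $\frac{\|a\|_\infty}{|b_0|}\|c\|_\infty\ge 0$, the denominator $|b_0-\frac{\|a\|_\infty}{|b_0|}\|c\|_\infty|$ is at most $|b_0|$ in the only non-vacuous regime, so the middle term is at least $\frac{\|a\|_\infty}{|b_0|}$ and the second ``$<$'' runs the wrong way; flipping it would require $\frac{\|a\|_\infty}{|b_0|}\|c\|_\infty>2b_0$, which is incompatible with $d_0=b_0$ and the lower bound of Theorem \ref{thm:dlower}. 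The statement should be read as asserting only $\|l\|_\infty<\|a\|_\infty\big/\bigl|b_0-\frac{\|a\|_\infty}{|b_0|}\|c\|_\infty\bigr|$, with $\|a\|_\infty/|b_0|$ a lower rather than an upper bound on that quantity; your proposal proves everything that is actually provable here and correctly isolates what is not.
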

\begin{proof}
Using Theorem \ref{thm:dlower}, the maximum value of $l$ must be achieved when the largest value of $a$ is divided by the smallest value of $d$.
\end{proof}

\noindent In fact, although Theorem \ref{thm:dbound} provides an upper bound for the value of $d$, using the previous theorems a tighter more accurate bound can now be defined.

\begin{thm}\label{thm:duppertight}
Let $A$ is diagonal dominant by row, and let $A$ have LU factorisation $A=LU$ then 
$d \leq \|b\|_{\infty} + \| l \|_{\infty} \| c \|_{\infty}$, provided that $b(n)$ is a monotonically increasing function of the row index $n$. \\
\end{thm}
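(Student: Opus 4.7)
The plan is to apply the triangle inequality directly to the defining recurrence for the diagonal entries of $U$ produced by the Thomas algorithm, namely $d_n = b_n - l_n c_{n-1}$ with $d_0 = b_0$. This is a one-line inequality once the correct recurrence is in hand, so the real content of the theorem is that the bound is stated in terms of $\|l\|_\infty$, which by Theorem \ref{thm:lupper} we already control under the stated hypotheses.

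First I would recall from the Thomas algorithm (Algorithm 1, lines $l_i = a[i]/d[\text{prev}]$ and $d[i] = b[i] - l_i c[\text{prev}]$) that for every $n \geq 1$
\begin{equation}
d_n \;=\; b_n - l_n c_{n-1},
\end{equation}
while $d_0 = b_0$ trivially satisfies $|d_0| \leq \|b\|_\infty$. Then for $n \geq 1$ the triangle inequality gives
\begin{equation}
|d_n| \;\leq\; |b_n| + |l_n|\,|c_{n-1}| \;\leq\; \|b\|_\infty + \|l\|_\infty\,\|c\|_\infty.
\end{equation}
Taking the maximum over $n$ yields the claimed bound $\|d\|_\infty \leq \|b\|_\infty + \|l\|_\infty \|c\|_\infty$, which I read as the statement of the theorem (with the implicit understanding that the left-hand side refers to the infinity norm of $d$, matching the style of the earlier bounds).

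The monotonicity hypothesis on $b(n)$ and the condition $\Delta b \leq \|c\|_\infty$ are not used in the triangle-inequality step itself; their role is to guarantee, via Theorems \ref{thm:dlower} and \ref{thm:lupper}, that $\|l\|_\infty$ is finite and admits the explicit bound $\|a\|_\infty / \bigl|\,b_0 - (\|a\|_\infty/|b_0|)\|c\|_\infty\,\bigr|$. Substituting this into the new inequality produces the promised sharpening of Theorem \ref{thm:dbound}: instead of the blunt factor $3\|b\|_\infty$, we obtain a bound whose correction term $\|l\|_\infty\|c\|_\infty$ is typically much smaller than $2\|b\|_\infty$ in the diagonally dominant regime where $\|a\|_\infty \ll |b_0|$.

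I do not anticipate any genuine obstacle here; the only subtlety is notational, in confirming that the ``$d$'' on the left of the inequality is to be read as $\|d\|_\infty$ to match Theorem \ref{thm:dbound} and the usage $\|l\|_\infty$, $\|c\|_\infty$ on the right. Once that is fixed, the proof is a direct estimation of the Thomas recurrence.
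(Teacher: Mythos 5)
Your argument is correct: the triangle inequality applied to the recurrence $d_n = b_n - l_n c_{n-1}$ (with $d_0 = b_0$) immediately gives $\|d\|_\infty \leq \|b\|_\infty + \|l\|_\infty\|c\|_\infty$, and your reading of the left-hand side as $\|d\|_\infty$ is the only sensible one. The paper actually omits a proof of this theorem entirely, merely remarking that the tighter bound follows ``using the previous theorems,'' so your one-line estimate (together with the observation that the monotonicity hypothesis serves only to control $\|l\|_\infty$ via Theorem~\ref{thm:lupper}) supplies exactly the intended argument.
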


\subsubsection{Bounding the Thomas Algorithm}

The first section describes various bounds for the LU decomposition of a matrix. This forms the basis of the well known Thomas Algorithm used for solving tridiagonal inversion problems of the form $Tx=y$, where $T$ is a tridiagonal matrix. The first stage of the algorithm is to apply LU decomposition to the matrix and then solve to auxiliary equations using forward and backward substitution. 

\begin{thm}\label{thm:fwrdBound}
Let a tridiagonal matrix, $T$, which is diagonally dominant by row, have LU factorisation $T=LU$ with $\| l \|_{\infty} < 1$. Then solving the first auxiliary equation of the inversion problem $Lz = y \text{ with } z = Ux $, it holds that 
$\| r \|_{\infty} < \|y\|_{\infty}( \frac{1}{1-\| l \|_{\infty}})$
\end{thm}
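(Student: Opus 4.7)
The plan is to unfold the forward substitution recursion and then dominate it by a geometric series in $\|l\|_{\infty}$. Since $L$ is lower bidiagonal with unit diagonal and subdiagonal entries $l_i$, the equation $Lz = y$ reduces to the scalar recursion $z_0 = y_0$ and $z_i = y_i - l_i z_{i-1}$ for $i \ge 1$. Applying the triangle inequality and replacing $|y_i|$ and $|l_i|$ by their sup-norms gives the one-step bound
\begin{equation}
|z_i| \;\le\; \|y\|_{\infty} + \|l\|_{\infty}\,|z_{i-1}|.
\end{equation}

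Next I would iterate this one-step bound by induction on $i$. Assuming the inductive hypothesis $|z_{i-1}| \le \|y\|_{\infty}\sum_{k=0}^{i-1}\|l\|_{\infty}^{k}$, the displayed inequality immediately yields
\begin{equation}
|z_i| \;\le\; \|y\|_{\infty}\sum_{k=0}^{i}\|l\|_{\infty}^{k}.
\end{equation}
The base case $|z_0| = |y_0| \le \|y\|_{\infty}$ is trivial. Since this estimate holds uniformly in $i$, taking the maximum over $i$ gives $\|z\|_{\infty} \le \|y\|_{\infty}\sum_{k=0}^{N}\|l\|_{\infty}^{k}$.

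Finally, the hypothesis $\|l\|_{\infty} < 1$ lets me dominate the finite partial sum by the convergent geometric series $\sum_{k=0}^{\infty}\|l\|_{\infty}^{k} = 1/(1-\|l\|_{\infty})$, producing the stated bound (with strict inequality, because for finite $N$ the tail of the series is positive, giving $\sum_{k=0}^{N}\|l\|_{\infty}^{k} < 1/(1-\|l\|_{\infty})$).

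There is no real obstacle here: the argument is a textbook geometric-series majorization of a linear recurrence. The only mild care needed is in the inductive step, where one must keep track of the running maximum rather than just the current $|z_i|$; the clean way to handle this is the induction above, which bounds $|z_i|$ itself by a quantity already monotone in $i$, so the max is attained in the limit and is controlled by $1/(1-\|l\|_{\infty})$. The hypothesis of diagonal dominance by rows is not used directly in the bound itself; it enters only insofar as it guarantees the existence of the LU factorization and the applicability of the earlier theorems that let us control $\|l\|_{\infty}$.
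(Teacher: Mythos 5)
Your proof is correct and follows essentially the same route as the paper: unroll the forward-substitution recursion $z_i = y_i - l_i z_{i-1}$, majorize every $|y_i|$ and $|l_i|$ by the corresponding sup-norm, and dominate the resulting partial geometric sum by $1/(1-\|l\|_{\infty})$ using the hypothesis $\|l\|_{\infty}<1$. The only difference is presentational --- you package the unrolling as an induction on a one-step bound, whereas the paper writes out the explicit expansion of $z_N$ --- and your version is if anything the cleaner of the two (note the quantity called $\|r\|_{\infty}$ in the statement is what you, correctly, call $\|z\|_{\infty}$).
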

\begin{proof}
First the term for $z_N$ is expanded and using theorem \ref{thm:lupper} it is possible to replace the individual $l_i$ terms with the upper bound $\| l \|_{\infty}$
\begin{equation}
\| r \|_{\infty} \leq |y_N| + \sum_{k=1}^{N-1} \| l \|_{\infty}^k|y_{k}| .
\end{equation}

It is then possible to compact the telescopic sum into a geometric sequence, which has a maximum value when $i = N$, that is using all of the terms. Given that the index of the largest $y$ value may not be known a larger bound can be formed by including this in the geometric sum as the final term.\\

\noindent In fact we can further loosen the bound by assuming that all values are the maximum, so 
\begin{equation}
\| z \|_{\infty} \leq |y_N| + \sum_{k=1}^{N-1} \| l \|_{\infty}^k|y_{k}| \leq \|y\|_{\infty}(1+ \sum_{k=1}^{N-1} \| l \|_{\infty}^k) .
\end{equation}
\noindent Using the formulas for the sum of a geometric sequence the max bound becomes
\begin{equation}
\| z \|_{\infty} \leq \|y\|_{\infty} \left( 1+ \frac{\| l \|_{\infty}(1-\| l \|_{\infty}^{N-1})}{1-\| l \|_{\infty}} \right) ,
\end{equation}
\noindent and finally in the case that $\| l \|_{\infty} < 1$ a simpler form using the infinite geometric sum can be used:

\begin{equation}
\|y\|_{\infty} \left( 1+ \frac{\| l \|_{\infty} ( 1-\| l \|_{\infty}^{N-1} ) }{1-\| l \|_{\infty}} \right) < 
\|y\|_{\infty} \left( \frac{1}{1-\| l \|_{\infty}} \right)
\end{equation}

\end{proof}

The hardware optimised algorithm presented here requires the calculation of two additional intermediates $\frac{c}{d}$ and $\frac{z}{d}$.

\begin{thm}
Let $T$ be a tridiagonal matrix which is diagonally dominant by row and let $T$ have LU factorisation $T=LU$ with $\| l \|_{\infty} < 1$ and assume the conditions of Theorem \ref{thm:dlower} hold. Then the intermediate values, $|\frac{c}{d}| \leq \frac{\|c\|_{\infty}}{|b_0|}$ and $|\frac{z}{d}| \leq \frac{\|z\|_{\infty}}{|b_0|}$.
\end{thm}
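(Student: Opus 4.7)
The plan is to reduce the claim to a uniform lower bound $|d_n| \ge |b_0|$ for every $n$: once that bound is in hand, the two inequalities follow immediately from $|c_n/d_n| \le |c_n|/|b_0| \le \|c\|_\infty/|b_0|$ and the analogous estimate for $z_n/d_n$. The base case is trivial because $d_0 = b_0$, which alone settles the row $n=0$.

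For the inductive step I would work with the LU recurrence $d_n = b_n - l_n c_{n-1}$. Assuming $|d_{n-1}| \ge |b_0|$, Theorem~\ref{thm:lupper} gives $|l_n| \le \|a\|_\infty/|b_0| < 1$, so by the triangle inequality $|d_n| \ge |b_n| - |l_n|\,|c_{n-1}| \ge |b_n| - (\|a\|_\infty/|b_0|)\|c\|_\infty$. Monotonicity of $b(n)$ gives $|b_n| \ge |b_0|$, so the remaining task is to absorb the correction term $(\|a\|_\infty/|b_0|)\|c\|_\infty$ into the slack $|b_n| - |b_0| = \sum_{k=1}^n \Delta b_k$ produced by the strict increase of $b$. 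Once $|d_n| \ge |b_0|$ is secured, the conclusion for both $c/d$ and $z/d$ is just a division.

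The main obstacle is precisely this last book-keeping step, since the hypothesis $\Delta b \le \|c\|_\infty$ carried over from Theorem~\ref{thm:dlower} is an \emph{upper} bound on the growth of $b_n$, whereas absorbing the correction requires a matching \emph{lower} bound sufficient to cancel $(\|a\|_\infty/|b_0|)\|c\|_\infty$. If a clean argument under exactly the stated hypotheses proves elusive, a direct invocation of Theorem~\ref{thm:dlower} recovers the slightly weaker estimate $|c_n/d_n| \le \|c\|_\infty / |b_0 - (\|a\|_\infty/|b_0|)\|c\|_\infty|$, which coincides with the stated bound up to a multiplicative factor close to unity whenever $\|a\|_\infty \|c\|_\infty \ll |b_0|^2$, the regime relevant to the implicit Black--Scholes application motivating this section.
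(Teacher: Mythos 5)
Your plan is the natural one, and your diagnosis of where it breaks down is exactly right --- and worth confirming in strong terms. First, note that the paper states this theorem with \emph{no proof at all}, so there is no argument of record that closes the step you are worried about. Second, the step cannot be closed under the stated hypotheses: the claimed constants are equivalent to the uniform lower bound $|d_n| \ge |b_0|$ (at least at the indices where $|c_n|$, $|z_n|$ attain their maxima), whereas the only lower bound supplied by Theorem~\ref{thm:dlower} is $|d_n| > |b_0 - \frac{\|a\|_\infty}{|b_0|}\|c\|_\infty|$, which is strictly weaker than $|b_0|$ whenever the off-diagonals are nonzero. Your observation that $\Delta b \le \|c\|_\infty$ points in the wrong direction is not a mere book-keeping nuisance; the statement actually fails as written. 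Take $b_0 = 1$, $c_0 = 0.5$, $a_1 = 0.5$, $b_1 = 1.01$, $c_1 = 0.51$, padded with a benign last row ($a_2 = 0.1$, $b_2 = 1.02$). Every hypothesis holds (row diagonal dominance, $b$ positive and increasing, $|a_n|, |c_n| < 1$, $\|l\|_\infty < 1$, $\Delta b \le \|c\|_\infty$), yet $d_1 = 1.01 - (0.5)(0.5)/1 = 0.76 < |b_0|$ and $|c_1/d_1| = 0.51/0.76 \approx 0.67 > 0.51 = \|c\|_\infty/|b_0|$.

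The honest fallback in your final paragraph is therefore the correct theorem: dividing the trivial numerator bounds by the Theorem~\ref{thm:dlower} denominator gives $|c_n/d_n| \le \|c\|_\infty / |b_0 - \frac{\|a\|_\infty}{|b_0|}\|c\|_\infty|$ and likewise for $z/d$, and in the example above this yields $0.51/0.745 \approx 0.685$, which is honoured. Recovering the sharper constant $|b_0|$ would require an extra hypothesis in the opposite direction, e.g.\ $\Delta b_n \ge \frac{\|a\|_\infty}{|b_0|}\|c\|_\infty$ for each $n$, or the smallness regime of the Black--Scholes grid where the off-diagonal products are $O(dt^2)$ against diagonal increments of $O(dt)$ --- which is presumably what the author had in mind. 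Your inductive skeleton ($d_0 = b_0$ as base case, the recurrence $d_n = b_n - l_n c_{n-1}$ with the triangle inequality for the step) is fine; it is the theorem's constant, not your proof strategy, that needs amending.
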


We can now derive the bounds for the final values as follows.

\begin{thm}
Given a tridiagonal matrix $T$ is diagonally dominant by row and let $T$ have LU factorisation $T=LU$ with $\| l \|_{\infty} < 1$, and assume the conditions of Theorem \ref{thm:dlower} hold, with $b_n > 1$ and $c_n < 1$ $\text{for all } n\leq N$. Then $\|v\|_{\infty} < \frac{\|z\|_{\infty}}{|b_0| - 1}$ .
\end{thm}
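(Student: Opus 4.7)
The plan is to mirror the proof of Theorem \ref{thm:fwrdBound}, but applied to the backward substitution stage rather than the forward one. First I would write the backward recurrence in terms of the two intermediates from the previous theorem, namely
\begin{equation}
v_n \;=\; \frac{z_n}{d_n} \;-\; \frac{c_n}{d_n}\, v_{n+1},
\end{equation}
with terminating value $v_N = z_N/d_N$. The advantage of this form is that the previous theorem already supplies the per-term bounds $|z_n/d_n| \le \|z\|_{\infty}/|b_0|$ and $|c_n/d_n| \le \|c\|_{\infty}/|b_0|$, so I can hand them straight in.

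Next I would unroll the recursion from index $N$ downwards. A straightforward induction (triangle inequality at each step) yields, for any $n \le N$,
\begin{equation}
|v_n| \;\le\; \frac{\|z\|_{\infty}}{|b_0|} \sum_{k=0}^{N-n} \left(\frac{\|c\|_{\infty}}{|b_0|}\right)^{k}.
\end{equation}
This is exactly the telescoping structure that appeared in Theorem \ref{thm:fwrdBound}, except the role of $\|l\|_{\infty}$ is now played by the ratio $\|c\|_{\infty}/|b_0|$. Since the hypotheses $b_n > 1$ and $c_n < 1$ force $\|c\|_{\infty}/|b_0| < 1$, I can loosen the finite geometric sum to its infinite counterpart to obtain
\begin{equation}
\|v\|_{\infty} \;\le\; \frac{\|z\|_{\infty}}{|b_0|} \cdot \frac{1}{1 - \|c\|_{\infty}/|b_0|} \;=\; \frac{\|z\|_{\infty}}{|b_0| - \|c\|_{\infty}}.
\end{equation}

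The final step is to turn this into the stated bound $\|z\|_{\infty}/(|b_0|-1)$. Because $\|c\|_{\infty} < 1$, we have $|b_0| - \|c\|_{\infty} > |b_0| - 1 > 0$, and monotonicity of $1/x$ gives the strict inequality required. The step I expect to require the most care is the convergence/positivity bookkeeping: I need $|b_0| > 1$ strictly for the final denominator to be positive, and $\|c\|_{\infty} < 1$ strictly for the geometric series to close; both are guaranteed by the hypotheses, but the chain of strict versus non-strict inequalities should be stated explicitly so the reader can see why the final bound is indeed strict. Aside from this bookkeeping the argument is a direct adaptation of the earlier forward-substitution bound.
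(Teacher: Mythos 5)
Your proposal is correct and follows essentially the same route as the paper: unroll the backward recursion using the per-term bounds $|z_n/d_n|\leq \|z\|_{\infty}/|b_0|$ and $|c_n/d_n|\leq \|c\|_{\infty}/|b_0|$, obtain the geometric sum $\frac{\|z\|_{\infty}}{|b_0|}\sum_{k}(\|c\|_{\infty}/|b_0|)^{k}$, and pass to the infinite series before loosening $|b_0|-\|c\|_{\infty}$ to $|b_0|-1$. Your explicit bookkeeping of the intermediate bound $\|z\|_{\infty}/(|b_0|-\|c\|_{\infty})$ and of the strict inequalities is slightly more careful than the paper's, but the argument is the same.
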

\begin{proof}

\begin{eqnarray}
v_N = \frac{z}{d}\\
|v_N| < \frac{\|z\|_{\infty}}{|b_0|}
\end{eqnarray}

\noindent The recursion begins at $v_{N-1}$, hence it is possible to see that

\begin{eqnarray}
|v_{N-1}| < \frac{\|z\|_{\infty}}{|b_0|} + V_N \frac{\|c\|_{\infty}}{|b_0|} <  \frac{\|z\|_{\infty}}{|b_0|} + \frac{\|z\|_{\infty}}{|b_0|} \frac{\|c\|_{\infty}}{|b_0|} .
\end{eqnarray}

\noindent Expanding the recursion in this manner the result for a sequence is given by

\begin{eqnarray}
|v_n| < \sum_{i = 0}^{N-n}\frac{\|c\|_{\infty}^i\|z\|_{\infty}}{|b_0|^{i+1}} .
\end{eqnarray}

\noindent Now assuming $|b_0| > 1$ and $\|c\|_{\infty} <1$, the limit of this sequence is given by:

\begin{eqnarray}
|v_n| < \sum_{i = 0}^{N-n}\frac{\|c\|_{\infty}^i\|z\|_{\infty}}{|b_0|^{i+1}} < \frac{\|z\|_{\infty}}{|b_0|-1}
\end{eqnarray}
\end{proof}

Combing the previous theorems it is now possible to define a set of conditions that can ensure the absolute value of any variable in the algorithm does not exceed a certain bound. This will prove extremely useful when applying the fixed-point designs to given problems.
\begin{prop}\label{prop:tConditions}
For a given integer $Z > 0$ there exists a set of conditions such that all intermediate variables in the Thomas algorithm can be bounded by $Z$, given that  $b(n)$ is a positive monotonically increasing function of the row index, and $|a_n|< 1 $,  $|c_n|<1$ and $|b_n| > 1 $ $\forall n \leq N$. The following conditions are sufficient.  
\begin{enumerate}
\item  $\|l\|_{\infty} < 1$ this implies $\|a\|_{\infty} < |b_0| $
\item $\|y\|_{\infty} < Z\frac{|b_0| - \|a\|_{\infty}}{|b_0| + 1} $
\item $ \|c\|_{\infty} < |b_0| $
\item $\Delta b \leq \|c\|_{\infty} $
\end{enumerate}
\end{prop}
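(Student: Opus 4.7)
The strategy is to verify, one variable at a time, that every quantity produced by the forward and backward passes is bounded in absolute value by $Z$. This is done by checking that conditions 1--4 satisfy the hypotheses of the preceding theorems and then chaining their conclusions. The intermediates that need to be controlled are $l_n$, $d_n$, $z_n$, $c_n/d_n$, $z_n/d_n$, and the backward iterates $v_n$, together with the internal products $l_n z_{n-1}$, $l_n c_{n-1}$ and $(c_n/d_n)v_{n+1}$ that appear inside the subtractions of Algorithm 1.

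First I would settle $l_n$ and $l_n c_{n-1}$. Condition 1 gives $\|a\|_{\infty}<|b_0|$; combined with condition 4 and the positivity/monotonicity hypothesis on $b(n)$ from the preamble, Theorem~\ref{thm:lupper} yields $\|l\|_{\infty}<\|a\|_{\infty}/|b_0|<1\leq Z$. Since $|c_n|<1$, the product $|l_n c_{n-1}|<1\leq Z$ too. Theorem~\ref{thm:duppertight} then bounds $d_n$ through $\|d\|_{\infty}\leq\|b\|_{\infty}+\|l\|_{\infty}\|c\|_{\infty}$, which sits inside $Z$ for any admissible range of the input sequence $b$.

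Next, for the forward-pass values $z_n$, Theorem~\ref{thm:fwrdBound} gives $\|z\|_{\infty}<\|y\|_{\infty}/(1-\|l\|_{\infty})$. Substituting the sharper estimate $\|l\|_{\infty}<\|a\|_{\infty}/|b_0|$ turns this into $\|z\|_{\infty}<\|y\|_{\infty}|b_0|/(|b_0|-\|a\|_{\infty})$, so condition 2 collapses to $\|z\|_{\infty}<Z|b_0|/(|b_0|+1)<Z$. The bound just above on $c/d$ and $z/d$ then delivers $|c_n/d_n|\leq\|c\|_{\infty}/|b_0|<1\leq Z$ and $|z_n/d_n|\leq\|z\|_{\infty}/|b_0|<Z$.

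Finally I would bound $v_n$ through the theorem $\|v\|_{\infty}<\|z\|_{\infty}/(|b_0|-1)$, composing it with the bound on $\|z\|_{\infty}$ to obtain an estimate of $\|v\|_{\infty}$ in terms of $\|y\|_{\infty}$ alone, and then invoking condition 2 once more. The remaining intermediate $(c_n/d_n)v_{n+1}$ falls under $\|v\|_{\infty}$ because $|c_n/d_n|<1$, so no further variable escapes once $v$ is controlled. The main obstacle, and the one step that genuinely deserves care, is this final algebraic verification: the particular form $(|b_0|-\|a\|_{\infty})/(|b_0|+1)$ must be shown to dominate simultaneously the looser ratio needed for $\|z\|_{\infty}\leq Z$ and the tighter ratio $(|b_0|-1)(|b_0|-\|a\|_{\infty})/|b_0|$ needed for $\|v\|_{\infty}\leq Z$, and it is precisely here that the quadratic in $|b_0|$ must be handled over the admissible range $|b_0|>1$.
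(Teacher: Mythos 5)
Your overall strategy---chaining Theorem~\ref{thm:lupper}, Theorem~\ref{thm:fwrdBound}, the $c/d$ and $z/d$ bounds, and the final bound $\|v\|_{\infty}<\|z\|_{\infty}/(|b_0|-1)$---is exactly the route the paper intends; in fact the paper supplies no written proof of Proposition~\ref{prop:tConditions} at all, only the remark that it follows by combining the preceding theorems, so in spirit you are reconstructing the missing argument faithfully. The difficulty is that the one step you explicitly defer is not a routine verification: it fails on part of the admissible range. Following your own chain, condition 2 gives $\|z\|_{\infty}<\|y\|_{\infty}|b_0|/(|b_0|-\|a\|_{\infty})<Z|b_0|/(|b_0|+1)<Z$, which is fine, but then $\|v\|_{\infty}<\|z\|_{\infty}/(|b_0|-1)<Z|b_0|/(|b_0|^2-1)$, and $Z|b_0|/(|b_0|^2-1)\le Z$ holds if and only if $|b_0|^2-|b_0|-1\ge 0$, i.e.\ $|b_0|\ge(1+\sqrt{5})/2\approx 1.618$. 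The hypotheses only guarantee $|b_0|>1$, so for $1<|b_0|<(1+\sqrt{5})/2$ the comparison you flag---that $(|b_0|-\|a\|_{\infty})/(|b_0|+1)$ is dominated by $(|b_0|-1)(|b_0|-\|a\|_{\infty})/|b_0|$---is simply false, and the backward iterates $v_n$ are not shown to stay below $Z$. To close the argument you must either strengthen condition 2 (replace the denominator $|b_0|+1$ by, say, $\max\bigl(|b_0|+1,\ |b_0|/(|b_0|-1)\bigr)$) or add a lower bound on $|b_0|$; as written, ``the quadratic must be handled'' is not a proof but the precise point at which the proof breaks.

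A second, smaller gap: your claim that $\|d\|_{\infty}\le\|b\|_{\infty}+\|l\|_{\infty}\|c\|_{\infty}$ ``sits inside $Z$ for any admissible range of the input sequence $b$'' is unjustified. Conditions 1--4 impose no upper bound on $b$, so $\|b\|_{\infty}$, and hence $\|d\|_{\infty}$, can exceed $Z$; even granting that the inputs themselves are representable ($\|b\|_{\infty}\le Z$), the bound only yields $\|d\|_{\infty}<Z+1$. This defect is arguably in the proposition as stated rather than in your write-up, but a complete proof has to acknowledge it, e.g.\ by requiring $\|b\|_{\infty}+\|l\|_{\infty}\|c\|_{\infty}\le Z$ or by noting that one extra integer bit suffices for $d$.
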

\noindent Note that the conditions in Proposition \ref{prop:tConditions} are sufficient but may not be necessary.


\section{Hardware Implementation}
\subsection{FPGA Resource Usage}
The Thomas Solver hardware was tested using the ZedBoard Xilinx Zynq7020 Evaluation Kit. The Zynq7020 is a system-on-chip which consists of two ARM-A9 processors connected to Xilinx Artix-7 FPGA fabric, allowing a high-speed interface between CPU and FPGA. Using the Zynq7020 the system of equations are formulated in floating-point on the ARM-A9 CPU, these are then transferred to the FPGA via AXI interfaces and solved using the FPGA Thomas solver. The fixed-point results are then converted back to floating-point and compared to the results for the same problem solved using floating-point arithmetic.

The results in Table \ref{tab:resources} have been obtained post-implementation from the Vivado Design Suite. The base design used has $N_{max} = 512$ with 10 threads ($M_{max} = 10$), and variable arithmetic. A floating point design and three fixed-point solvers with the data representation [integer bits, fractional bits] have been tested, 32bit[2,30], 24bit[2,22], 16bit[2,14]. For the arithmetic cores the provided Xilinx base IP cores have been used, and set to make maximum usage of DSPs, and the Radix-2 divider algorithm is used as part of the fixed-point divider. 

\begin{table}\label{tab:resources}
\center
\caption{FPGA resources used for each design and percentages of resources used on the Xilinx Zynq7020.}
\scriptsize
\begin{tabular}{lllll}
\hline
 &  \multicolumn{4}{c}{Arithmetic Format }\\
 & Floating  & Fixed[2,30] & Fixed[2,22] & Fixed[2,14] \\
 \hline

Flip-Flops & 25721 (24\%) & 15369 (14\%)  & 17224 (16\%) & 10711 (10\%) \\
LUT & 27204 (51\%) & 20722 (39\%) & 16998 (32\%) & 11894 (22\%) \\
Mem-LUT & 10547 (61\%) & 8683 (50\%) & 6174 (35\%) & 4294 (25\%) \\
BRAM & 35 (25\%) & 3 (2\%)  & 3 (2\%) & 3 (2\%) \\
DSP & 6 (3\%) & 15 (7\%)  & 9 (4\%) & 6 (3\%) \\
Buft & 1 (3\%) & 1 (3\%) & 1 (3\%) & 1 (3\%) \\
Clock & 100MHz & 200MHz & 200MHz & 200MHz\\
Power (W)  1.932 & 1.827  & 1.788 & 1.648 & 1.568 \\
\hline

\end{tabular}
\end{table}

\begin{table}\label{tab:latency}
\center
\caption{Clock cycle latency for each of the components of the Thomas solver core.}
\begin{tabular}{lllll}
\hline

 &  \multicolumn{4}{c}{Arithmetic Format }\\
 & Floating  & Fixed[2,30] & Fixed[2,22] & Fixed[2,14] \\
 \hline

Div (Radix-2) & 28 & 61 & 52 & 36 \\
Multiplier & 12 & 6 & 6 & 6 \\
Subtractor  & 4 & 2 & 2 & 2 \\
Thomas Forward  & 44 & 69 & 60 & 44 \\
Thomas Backward & 16 & 8 & 8 & 8 \\
Administration & 3 & 3 & 3 & 3 \\

\hline

\end{tabular}
\end{table}

Each of the solver designs has the same magnitude of latency, with the floating-point design providing the lowest total latency, although the fixed-point designs may be sped up by using higher radix divider algorithms. The disadvantage of the higher radix divider algorithms is that the maximum throughput is reduced due to the iterative nature of the algorithms, but this is useful if it is not possible to achieve maximum throughput of processing one tridiagonal system per clock cycle. The main advantage of the fixed point solvers is the reduced resource usage, which provides the opportunity to maximise coarse grain parallelism by allowing more solver cores to fit onto a device and also increasing the maximum number of pipelined tridiagonal systems each core can solve. As can be expected the amount of memory resources is proportional to the total data width used for the fixed-point designs, whilst the floating-point solver, although 32bits wide, uses significantly more memory resources (BRAM and memory LUTs).

\subsection{Performance}

\begin{table}\label{tab:speed}
\center
\caption{The average time(ms) for computing the solution to tridiagonal systems (N=100) on a desktop CPU and the implemented FPGA Thomas solver.}
\begin{tabular}{lll}
\hline
 & Max Throughput & Min Throughput \\
 \hline
CPU(2.6GHz) & 0.020ms(1x) & 0.020ms(1x) \\
Floating & 0.0012ms(16x) & 0.063ms(0.31x) \\
Fixed[2,30] & 0.00055ms(36x) & 0.040ms(0.50x) \\
Fixed[2,22] & 0.00055ms(36x) & 0.036ms(0.55x) \\
Fixed[2,14] & 0.00057ms(35x) & 0.028ms(0.72x) \\
\hline
\end{tabular}
\end{table}

The latency performance of the solver can be evaluated using Equation \ref{eq:speed} once the implemented FPGA clock speed is known. The floating-point design was only able to achieve a maximum clock frequency of 100MHz whilst the fixed-point designs where able to achieve double this at 200MHz. Therefore although the fixed-point designs may have slightly higher latency in terms of clock cycles, the speed of computation is considerably faster due to the higher clock rate. 

The average time in milliseconds per tridiagonal system is shown in Table \ref{tab:speed} for minimum throughput, a single tridiagonal system, and maximum throughput, where the pipeline is completely full. These results are compared to the average time taken for a 2.6GHz CPU on a top of the range desktop machine. If the solver was to be used for single tridiagonal systems the speed is fractionally less than a top of the range 2.6GHz CPU, but this is without taking advantage of the pipelined design. At maximum throughput it is possible to achieve up to a 36x speed-up and 16x speed-up over a 2.6GHz CPU for FPGA fixed-point and floating-point designs respectively. In terms of cost of computing power the basic \$200 FPGA board used here can outperform, in terms of speed, a \$1000+ desktop computer, as well as also using considerably less power. This is due to the deep pipelining and custom data paths possible on an FPGA.

\section{Implementation for Implicit Finite Difference Schemes}
In this section we evaluate the accuracy of the fixed-point Thomas solvers in the context of options pricing. Pricing options via implicit finite difference methods require one or a system of several tridiagonal equations to be solved. We solve tridiagonal systems arising in the implicit finite difference scheme for European options using the Black-Scholes model.

\subsection{Scaling For Fixed-Point Designs}
As previously discussed the motivation is to use fixed-point arithmetic because it results in smaller and faster designs compared to floating-point. The tridiagonal coefficients for pricing a European option via implicit finite difference are given by:
\begin{align}
\notag
a_n &= -(n^2\sigma^2 - nr)dt \\
\notag
b_n &=  1 + (n^2\sigma^2 + r)dt \\
\notag
c_n &=  -(n^2\sigma^2 + nr)dt\\
\notag
a_N &= Nrdt \\
\notag
b_N &= 1 -(Nr-r)dt\\
\notag
y_n &= \text{payoff}(S_n)
\end{align}

\noindent where $y$ is defined by the initial boundary condition of the problem, in this case the payoff function of the option.

Observing the coefficients $b_n > 1$ $\text{ for all } i \leq N$, two integer bits will be used for the fixed-point representation and $Z = 2$ to ensure no arithmetic overflow. Using Proposition 1 it is possible to show that the coefficients of implicit Black-Scholes pricing the algorithm can be bounded so that $Z = 2$ after applying a basic grid constraint  to bound coefficient values and an appropriate transformation for the $y$ values to meet condition 2 of Proposition \label{prop:tConditions}. This is more formally expressed in Proposition \ref{prop:grid}.

\begin{prop}\label{prop:grid}
Given a Black-Scholes implicit pricing problem, $A_I$, it is possible to ensure that the supremum of the algorithm i.e. all values calculated in the algorithm, $sup(|A_I|) < Z$, given the following grid constraint and suitable linear transform on the problem domain.\\
\begin{equation}\label{eq:gridConstraint}
dt < \frac{1}{\sigma^2N^2 }
\end{equation}
\begin{equation}
\hat{y}_n = f(y_n)
\end{equation}
\begin{equation}\label{eq:transform}
f(y_n) = y_nZ\frac{|b_0| - \|a\|_{\infty}}{(|b_0|+1)\|y\|_{\infty}} 
\end{equation}
\end{prop}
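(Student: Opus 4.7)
The plan is to verify, one by one, the hypotheses of Proposition \ref{prop:tConditions} for the coefficient sequences $(a_n,b_n,c_n)$ given above, together with the prerequisite assumptions that $b(n)$ is positive and monotonically increasing, $|a_n|<1$, $|c_n|<1$, and $|b_n|>1$ for every $n\leq N$. Once all seven items are checked, the bound $\sup(|A_I|)<Z$ follows immediately from Proposition \ref{prop:tConditions} applied to the transformed system, and no further analysis is needed.

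The first step is to exploit the grid constraint. Since $b_n = 1 + (n^2\sigma^2 + r)dt$, positivity, strict monotonic increase in $n$, and $|b_n|>1$ are immediate. The constraint $dt < 1/(\sigma^2 N^2)$ forces $n^2\sigma^2 dt < 1$ for all $n\leq N$, and together with the usual assumption $r \ll \sigma^2 N$ this yields $|a_n|<1$ and $|c_n|<1$. The same estimate gives condition (1) of Proposition \ref{prop:tConditions}: the maximum of $|a_n|$ occurs near $n=N$ and is bounded by $N^2\sigma^2 dt + Nr\,dt < 1 \leq |b_0|$. Condition (3) follows by the analogous bound on $|c_n|$. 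For condition (4) I would compute $\Delta b = (2n+1)\sigma^2 dt$, whose maximum is $(2N-1)\sigma^2 dt$, and compare it to $\|c\|_\infty$, which is of order $N^2\sigma^2 dt$; the inequality $\Delta b \leq \|c\|_\infty$ is therefore comfortable for any $N$ of practical size.

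Condition (2) is the reason for introducing the linear transform in Equation \ref{eq:transform}. The function $f$ simply rescales $y$ by the constant factor $Z(|b_0|-\|a\|_\infty)/((|b_0|+1)\|y\|_\infty)$, so by construction $\|\hat y\|_\infty = Z(|b_0|-\|a\|_\infty)/(|b_0|+1)$, saturating condition (2) up to an arbitrarily small slack. Since $f$ is linear and the Thomas algorithm is linear in $y$, every intermediate quantity in the forward and backward sweeps is scaled by the same constant, so the rescaled pricing output $\hat v$ inherits the uniform bound $\|\hat v\|_\infty < Z$ guaranteed by Proposition \ref{prop:tConditions}, and the original prices can be recovered by inverting $f$.

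The main obstacle, and the part that will require some care, is the anomalous last row: $a_N = Nr\,dt$ and $b_N = 1-(Nr-r)dt$ do not follow the interior pattern. In particular $b_N < 1 < b_{N-1}$ breaks strict monotonicity at the final index, and $|b_N|>1$ fails. The cleanest resolution is to treat the terminal row as a boundary condition and fold it into $y$ so that the remaining matrix has only the interior-style entries, to which the preceding analysis applies verbatim; alternatively, one can observe that the terminal perturbation is of order $Nr\,dt$, which is small under the grid constraint, and re-examine the proofs of Theorems \ref{thm:dlower}, \ref{thm:lupper}, and \ref{thm:duppertight} to confirm that the bounds survive with only a lower-order correction. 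Either route closes the proof; the algebra elsewhere is routine substitution.
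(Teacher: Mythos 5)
Your proposal matches the paper's (implicit) argument exactly: the paper offers no explicit proof of Proposition \ref{prop:grid}, merely asserting that it follows by checking the hypotheses of Proposition \ref{prop:tConditions} under the grid constraint \eqref{eq:gridConstraint} and the linear rescaling \eqref{eq:transform} of $y$, which is precisely the verification you carry out. Your two additional observations --- that the constraint $dt < 1/(\sigma^2 N^2)$ alone does not control the $nr\,dt$ contributions to $|a_n|$ and $|c_n|$ without a side condition such as $r < \sigma^2 N$, and that the terminal row $a_N = Nr\,dt$, $b_N = 1-(Nr-r)dt$ violates the monotonicity and $|b_n|>1$ hypotheses of Proposition \ref{prop:tConditions} --- point to real unstated assumptions that the paper glosses over rather than to defects in your own argument.
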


\subsection{Fixed-Point Solver Accuracy}
The fixed-point solver designs are tested over a sample of 5000 randomly selected tridiagonal equations generated by random option pricing problems. The two market dependant parameters, interest rate $r$ and volatility $\sigma$, are randomly chosen for each option sample to generate a new sample of tridiagonal equations to solve, with $r = U[0.01,0.05]$, $\sigma = U[0.10,0.30]$. The finite difference grid parameters are selected to meet the constraint in Equation \eqref{eq:gridConstraint} in the case of maximum market parameter values, which requires $dt = 0.001$. Finally, to meet the final condition for  $\|y\|$ a scaling factor of $\frac{0.45Z}{\|y\|_{\infty}}$ was used, calculated using Equation \eqref{eq:transform}, and to meet this condition the problem was chosen so that $S_{N} = 2$ and $K = 1$.

Table \ref{tab:minRes} gives the expected rounding error for the number of fractional bits used in the fixed-point design. The expected rounding error $e_{rnd}(x,f)$, where $e_{rnd}$ is the rounding error and $f$ is the number of fractional bits, for rounding a floating-point number $x$ to a fixed-point representation is given by:
\begin{equation}\label{eq:exptRndError}
\mathbb{E}(e_{rnd}(x,f)) = \frac{2^{f-1}}{2} \text{ ; } x \in U[0,\infty] .
\end{equation}
\noindent We assumed that the rounding error is uniformly distributed white noise \cite{Barnes1985}. If an error obtained is smaller than this value this indicates that the fixed-point value was rounded to 0, and the actual value is smaller than is possible to represent in the fixed-point representation. Errors within a similar magnitude as the magnitude of the expected error indicate that the fixed-point result is on average as accurate as is possible for the given fixed-point representation.

Figure \ref{fig:fx3230} shows the absolute error with respect to the floating point result for the fixed-point solver using 30 fractional bits. The most striking feature of this plot is how the error resembles the shape of the payoff function indicating that the magnitude of the option price plays a role in the error function. A worst case error function has been derived from the observation that for an option price, $V$ it holds that $V < S_n$, i.e the European option price must be at most the asset price due the the effect of the strike. The maximum error is then a function of the asset price, minimum expected rounding error and $n$ to take into account error prorogation factors through the iterations

\begin{equation}\label{eq:fxerror}
E(S_n) = nS_n\frac{2^{f-1}}{2}.
\end{equation}

\begin{figure}[t]
\begin{center}
\includegraphics[scale=0.3]{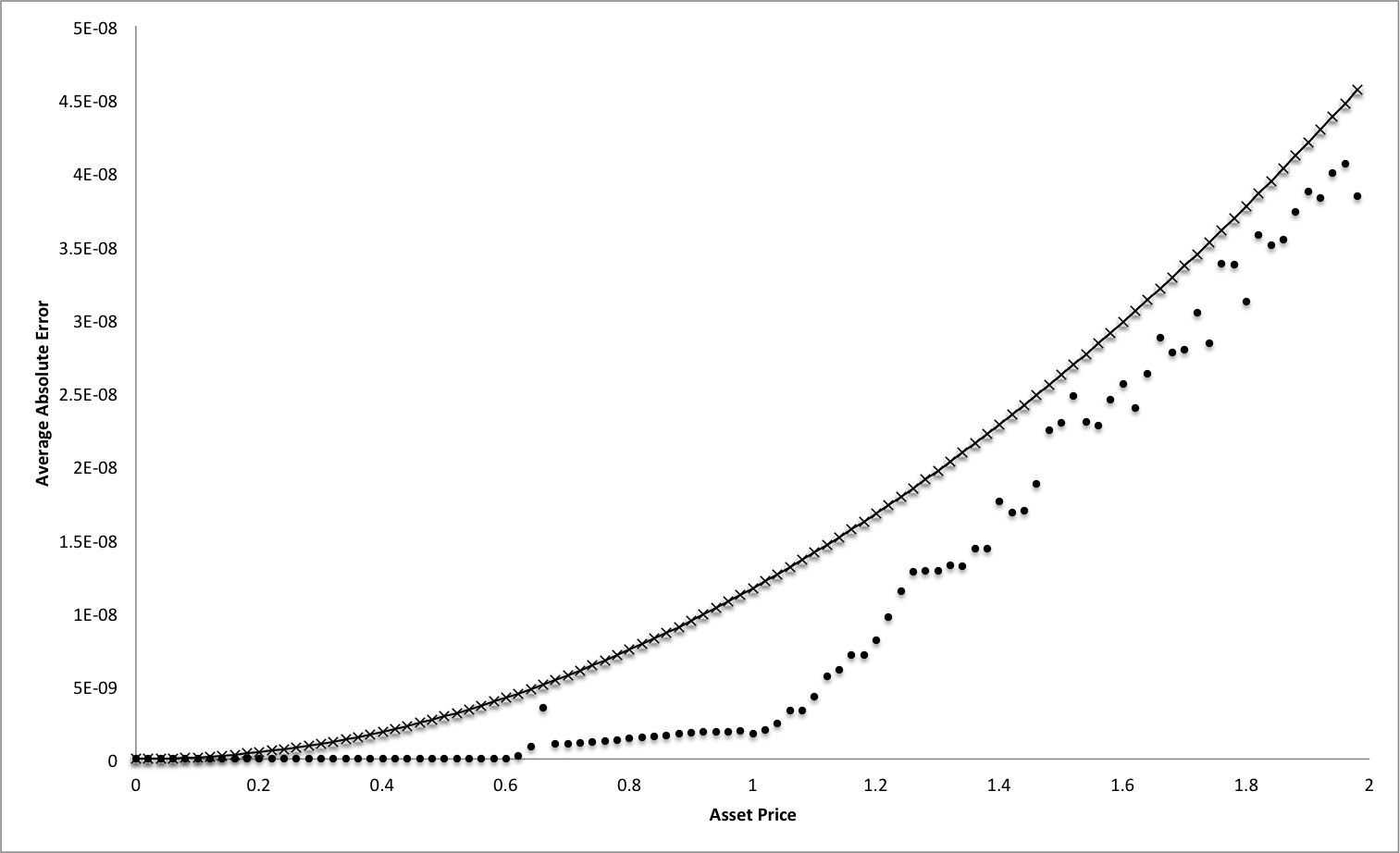}
\caption{Average absolute error over 5000 tridiagonal systems of the fixed-point results using 30 fractional bits with respect to floating-point results ($\bullet$). Estimated maximum error bound using equation \eqref{eq:fxerror}($\times $).}
\end{center}
\label{fig:fx3230}
\end{figure}

\begin{figure}[t]
\begin{center}
\includegraphics[scale=0.3]{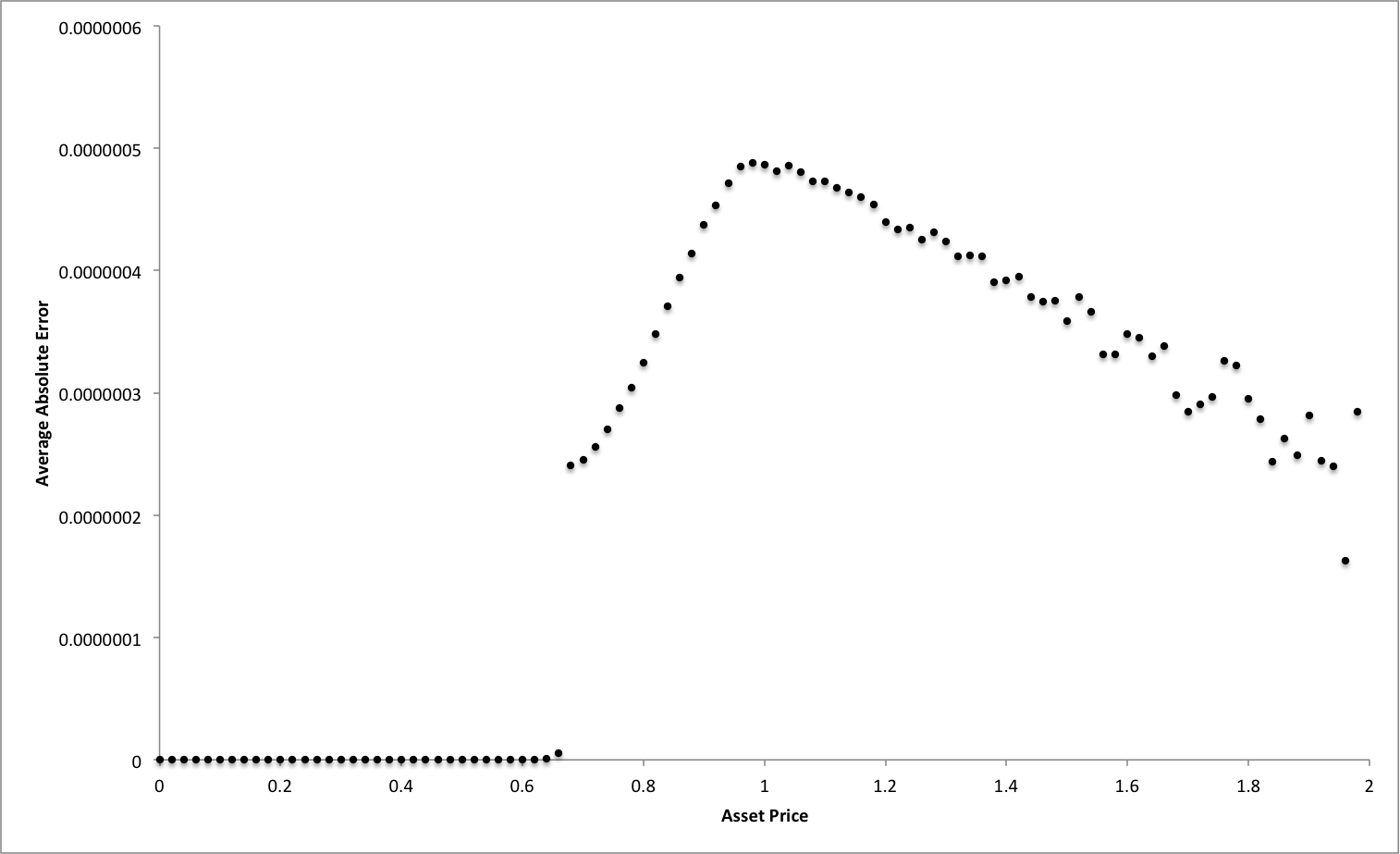}
\caption{Average absolute error over 5000 tridiagonal systems of the fixed-point results using 22 fractional bits with respect to floating-point results ($\bullet$).\label{fig:fx3222}}
\end{center}
\end{figure}

\begin{figure}[t]
\begin{center}
\includegraphics[scale=0.3]{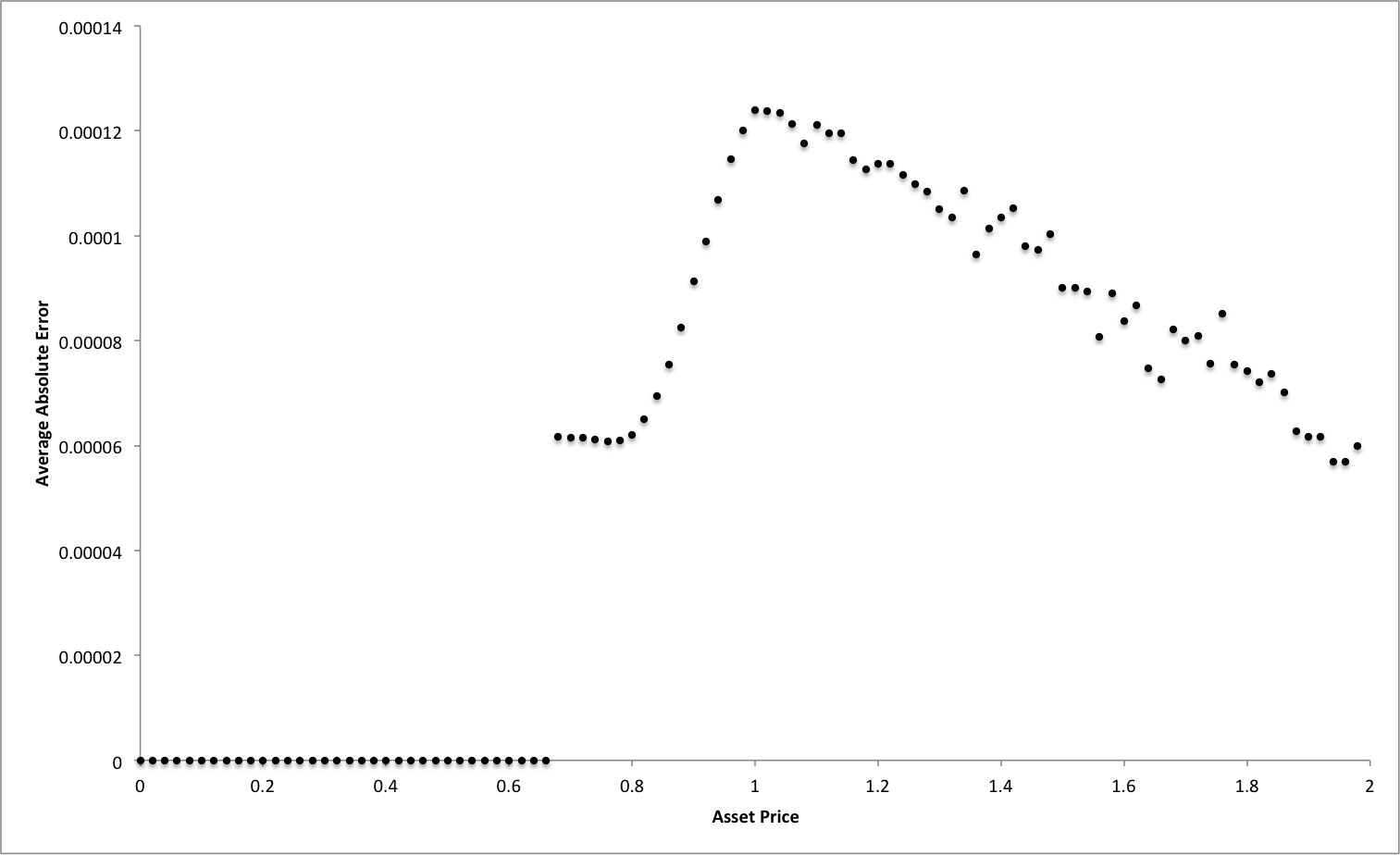}
\caption{Average absolute error over 5000 tridiagonal systems of the fixed-point results using 14 fractional bits with respect to floating-point results.\label{fig:fx3214}}
\end{center}
\end{figure}

\begin{table}\label{tab:minRes}
\center
\caption{Comparison of expected rounding error and maximum absolute error from the FPGA implementation.}
\begin{tabular}{llll}
\hline
 & \multicolumn{3}{c}{Fractional Width} \\
 & 30 & 22 & 14 \\ \hline
Expected Rounding Error & 2.33E-10 & 5.95E-08 & 1.52E-05 \\
Maximum FPGA Error & 4.06E-08 & 4.88E-07 & 1.23E-04 \\
\hline
\end{tabular}
\end{table}

Figures \ref{fig:fx3222} and \ref{fig:fx3214}
 show the absolute errors for the fixed-point solver with 22 and 14 fractional bits respectively. Both of sets of errors show a shape differing from the one observed for 30 fractional bits, with a peak near the strike price followed by a decent. However their respective absolute errors with respect to their minimum fractional resolution are considerably improved, with the largest magnitude of error being of the same order, this is up to 100 times smaller in magnitude than the error predicted by equation \ref{eq:fxerror}. 

These results show that the the fixed-point arithmetic is accurate up to a given decimal place, after which the accuracy begins to degrade. This explains why the 30 fractional bit errors were considerably larger than the respective minimum fractional resolution and in contrast to the 22 and 14 fractional bit implementations.

\section{Conclusions and Future Research}
In this work we proposed and introduced a prototype design for a high performance FPGA based tridiagonal solver. Fixed-point designs can be used to minimise resource usage and obtain higher clock rates compared to floating point designs. When compared to a 2.6GHz CPU on a top of the range desktop it was possible to achieve up to a 36x speed-up and 16x speed-up for the fixed-point and floating-point designs respectively. When solving large sets of independent tridiagonal system the system can be linearly scaled up by adding more solver cores onto the FPGA. For the fixed-point designs the errors introduced in the results due to the limited fractional resolution was investigated. Overall in the implicit option pricing examples the errors were well behaved with the maximum for the 22 and 14 bit fractional representations at 10x that of the expected rounding error, and 50x for 30 fractional bits. In summary, our method can be further integrated into a larger FPGA based implicit pricing system to achieve a high speed and low cost solution for accelerating options pricing.
Future work will investigate improving the accuracy whilst trying to retain high clock rates and low FPGA resource usage by using mixed-precision architectures; for example computing the forward iterations in 16bit fixed-point and the backward iterations in 32bit floating-point. Following up from the theoretical analysis further work with affine arithmetic techniques \cite{Fang2003} it maybe possible to develop a deeper understanding of the fixed-point error behaviour and propagation in the algorithm. Advancing from this work it would also be of interest to investigate the performance of more inherently parallel algorithms such as parallel cyclic reduction on an FPGA.\\

\newpage

\bibliographystyle{spbasic} 
\bibliography{jabref_adi.bib}

\end{document}